\documentclass{article}
\usepackage[preprint]{colm2026_conference}
\usepackage{microtype}
\usepackage{hyperref}
\usepackage{url}
\usepackage{booktabs}
\usepackage{lineno}
\usepackage{amsmath}
\usepackage{amssymb}
\usepackage{amsthm}
\usepackage{graphicx}
\usepackage{subcaption}
\usepackage{wrapfig}
\usepackage{multirow}
\usepackage{algorithm}
\usepackage{algorithmic}

\definecolor{darkblue}{rgb}{0, 0, 0.5}
\hypersetup{colorlinks=true, citecolor=darkblue, linkcolor=darkblue, urlcolor=darkblue}

\newtheorem{theorem}{Theorem}

\newtheorem{definition}[theorem]{Definition}
\newtheorem{assumption}[theorem]{Assumption}

\title{SpaCE: Rethinking Spatial Capacity and Generalization in Multi-Frame Multimodal Large Language Models}

\author{Mariana Costa, Camila Ferreira, Alberlúcia Rafael Soarez, Alejandro Torres \\
University of Brasilia \\
\texttt{mcs@unb.br, alejandro.torres@unb.br}}

\begin{document}
\ifcolmsubmission
\linenumbers
\fi
\maketitle

\begin{abstract}
Multi-modal large language models (MLLMs) have achieved remarkable empirical progress in spatial understanding through large-scale training on spatial visual question answering datasets. However, the theoretical foundations of multi-frame spatial reasoning remain entirely unexplored. We present SpaCE, a rigorous theoretical framework that characterizes the spatial reasoning capacity, sample complexity, and generalization guarantees of MLLMs operating on multi-frame inputs. We establish four main results. First, we prove an information-theoretic upper bound on spatial reasoning accuracy in terms of the mutual information between multi-frame observations and spatial targets. Second, we derive a sample complexity bound of order $\Theta(d_{\mathrm{eff}} \cdot K_{\max} / (\varepsilon^2 \cdot \delta))$, where $d_{\mathrm{eff}}$ is the effective spatial dimension and $K_{\max}$ bounds the KL divergence of the learned posterior. Third, we provide a PAC-Bayes generalization bound for multi-frame spatial reasoning under distribution shift. Fourth, we formally characterize the bias-variance trade-off between explicit 3D representations and implicit reasoning approaches, identifying the crossover conditions under which each paradigm is provably preferable. We validate our theoretical predictions on the MultiSPA, CA-VQA, and SpatialRGPT benchmarks, demonstrating that our bounds are empirically tight and that frame complementarity is the key driver of multi-frame spatial capacity. Our framework provides the first principled theoretical foundation for understanding when, why, and how multi-frame spatial reasoning in MLLs succeeds.
\end{abstract}

\section{Introduction}

Multi-modal large language models (MLLMs) have rapidly advanced in visual tasks, evolving into versatile assistants capable of a wide array of perception and reasoning challenges~\citep{xu2025multispatialmllm, daxberger2025mmspatial, wu2025spatialmllm}. A critical frontier in this evolution is spatial understanding, the ability to perceive, reason about, and predict the geometric and dynamic relationships among objects in three-dimensional space. This capability is not merely an academic curiosity; it is a prerequisite for deploying MLLMs in robotics, autonomous navigation, and embodied AI~\citep{conf/cvpr/SongBTTSB25, conf/itsc/LinTWZZ024}, where agents must interpret multi-frame visual streams to make spatially informed decisions. We formalize this setting precisely in Section~\ref{sec:formulation}.

Recent empirical efforts have made significant strides. Multi-SpatialMLLM~\citep{xu2025multispatialmllm} introduced the MultiSPA dataset with over 27 million samples spanning diverse 3D and 4D scenes, integrating depth perception, visual correspondence, and dynamic perception into a unified framework. SpatialReasoner~\citep{ma2025spatialreasoner} proposed explicit 3D representations shared across perception, computation, and reasoning stages. MM-Spatial~\citep{daxberger2025mmspatial} demonstrated that data alone can achieve depth perception comparable to dedicated monocular depth estimators. Spatial-MLLM~\citep{wu2025spatialmllm} introduced a dual-encoder architecture leveraging visual geometry foundation models. SpatialVLM~\citep{conf/cvpr/0003XKISGX24} and SpatialRGPT~\citep{conf/nips/ChengYFGYK0L24} pioneered data-driven approaches for endowing VLMs with spatial reasoning capabilities. These works collectively demonstrate that empirical performance on spatial benchmarks can be substantially improved through architectural innovations and large-scale training data.

However, despite this empirical progress, the theoretical foundations of multi-frame spatial reasoning in MLLMs remain entirely unexplored. Several fundamental questions lack rigorous answers: What is the information-theoretic limit of spatial reasoning accuracy given a set of input frames? How many training samples are necessary to achieve reliable spatial understanding? Under what conditions do multi-frame spatial reasoning models generalize to unseen scenes? When does an explicit 3D representation outperform implicit reasoning, and vice versa? The absence of theoretical answers to these questions leaves the field without principled guidance for model design, data collection, and deployment.

This paper addresses these gaps by introducing SpaCE, a theoretical framework for Spatial Capacity and generalization in multi-frame MLLMs. Our key insight is that multi-frame spatial reasoning can be analyzed through the lens of information theory and PAC-Bayesian learning theory, yielding tight bounds on accuracy, sample complexity, and generalization that are validated empirically.

We make the following contributions:
\begin{enumerate}
    \item We prove an information-theoretic upper bound (Theorem~\ref{thm:capacity}) on the spatial reasoning accuracy of any MLLM, showing that it is fundamentally limited by the mutual information between multi-frame observations and spatial targets.
    \item We derive a sample complexity bound (Theorem~\ref{thm:sample}) of order $\Theta(d_{\mathrm{eff}} \cdot K_{\max} / (\varepsilon^2 \cdot \delta))$, where $d_{\mathrm{eff}}$ is the effective spatial dimension determined by frame complementarity.
    \item We establish a PAC-Bayes generalization bound (Theorem~\ref{thm:pacbayes}) for multi-frame spatial reasoning under distribution shift, providing the first formal generalization guarantee for this setting.
    \item We formally characterize the explicit-implicit reasoning trade-off (Theorem~\ref{thm:tradeoff}) via a bias-variance decomposition, identifying the crossover conditions under which each paradigm is provably preferable.
    \item We validate our theoretical predictions on three benchmarks, demonstrating that our bounds are empirically tight and that frame complementarity is the key driver of multi-frame spatial capacity.
\end{enumerate}

The remainder of this paper is organized as follows. Section~\ref{sec:related} reviews related work. Section~\ref{sec:method} presents our theoretical framework. Section~\ref{sec:experiments} reports experimental validation. Section~\ref{sec:conclusion} concludes.

\section{Related Work}
\label{sec:related}

We organize related work into three categories: spatial understanding in MLLMs, theoretical foundations of multi-modal learning, and visual in-context learning for vision-language models.

\subsection{Spatial Understanding in MLLMs}

The challenge of endowing vision-language models with spatial reasoning capabilities has attracted substantial attention. SpatialVLM~\citep{conf/cvpr/0003XKISGX24} was among the first to demonstrate that Internet-scale 3D spatial VQA data can significantly enhance VLMs' qualitative and quantitative spatial reasoning, enabling chain-of-thought spatial reasoning and robotics applications. SpatialRGPT~\citep{conf/nips/ChengYFGYK0L24} advanced this direction by introducing a flexible plugin module for integrating depth information into the visual encoder, achieving strong generalization as a region-aware dense reward annotator.

More recently, several works have pushed the frontier of multi-frame and 3D spatial understanding. Multi-SpatialMLLM~\citep{xu2025multispatialmllm} proposed a comprehensive framework integrating depth perception, visual correspondence, and dynamic perception, with the MultiSPA dataset containing over 27 million samples. SpatialReasoner~\citep{ma2025spatialreasoner} introduced explicit 3D representations shared across perception, computation, and reasoning stages, outperforming Gemini 2.0 on 3DSRBench. MM-Spatial~\citep{daxberger2025mmspatial} leveraged large-scale 3D scene data with the CA-VQA dataset, demonstrating that data alone can achieve depth perception comparable to dedicated monocular depth estimators. Spatial-MLLM~\citep{wu2025spatialmllm} proposed a dual-encoder architecture combining 2D semantic features with 3D structure features from a visual geometry foundation model. Think3D~\citep{journals/corr/abs-2601-13029} enabled VLM agents to think with 3D space through 3D reconstruction and interactive camera-based operations.

Several benchmark and evaluation works have also emerged. Open3D-VQA~\citep{conf/mm/ZhangZZLF0CLC025} and Open3DVQA~\citep{journals/corr/abs-2503-11094} introduced benchmarks for embodied spatial concept reasoning. RoboSpatial~\citep{conf/cvpr/SongBTTSB25} focused on teaching spatial understanding to 2D and 3D vision-language models for robotic manipulation. HiSpatial~\citep{journals/corr/abs-2603-25411} addressed hierarchical 3D spatial understanding. VoxRep~\citep{conf/apsipa/DaoB25} enhanced 3D spatial understanding via voxel representations. The survey by~\citet{journals/corr/abs-2511-15722} provided a comprehensive overview of tasks, benchmarks, and methods for spatial reasoning in MLLMs. Additional works have explored spatial reasoning under specific settings, including omnidirectional reasoning~\citep{journals/corr/abs-2505-11907}, textual representation guided reasoning~\citep{journals/corr/abs-2603-23404}, reasoning enhancement~\citep{conf/iccv/NingTSLHPJ25}, keyframe-driven zero-shot reasoning~\citep{journals/corr/abs-2505-04911}, paper folding puzzles~\citep{conf/aaai/ZhouXHYLMRL26}, compositional spatial reasoning~\citep{journals/corr/abs-2601-16520}, robotic assembly~\citep{journals/corr/abs-2604-08983}, tool-leveraged reasoning~\citep{journals/corr/abs-2604-09712}, multi-robot cooperative reasoning~\citep{journals/corr/abs-2605-18431}, UAV spatial reasoning~\citep{conf/itsc/LinTWZZ024}, CT-spatial VQA~\citep{journals/corr/abs-2605-08787}, and mixed reality enhancement~\citep{conf/ismar/LinSZWFYZ25}. Spatial activation methods~\citep{journals/corr/abs-2511-01618} have also been explored. Despite this empirical richness, no prior work provides theoretical foundations for multi-frame spatial reasoning.

\subsection{Theoretical Foundations of Multi-Modal Learning}

Theoretical analysis of multi-modal learning has primarily focused on representation learning and generalization bounds. The PAC-Bayesian framework~\citep{si2025aligning} has been instrumental for deriving non-vacuous generalization bounds for neural networks, with recent extensions to meta-learning and hypernetwork architectures. Data filtering for effective alignment~\citep{si2025aligning} has shown that selective training data can improve generalization, a principle relevant to spatial VQA data curation. Long context alignment~\citep{si2025gateau} has explored selecting influential samples for alignment, connecting to our sample complexity analysis. Spoken task-oriented dialogue~\citep{si2023spokenwoz} has been studied in the context of multi-modal agents, providing insights into the complexity of multi-modal reasoning tasks.

In the vision domain, object detection without fine-tuning~\citep{hao2024detect} has explored zero-shot transfer capabilities relevant to spatial reasoning generalization. Driving scenario generation~\citep{li2024drivingdiffusion}, navigation world models~\citep{li2025driverse}, and uncertainty-aware visual localization~\citep{li2025u} have advanced spatial understanding in autonomous driving, providing application contexts for our theoretical framework. However, none of these works provide information-theoretic or sample complexity bounds specific to multi-frame spatial reasoning in MLLMs.

\subsection{Visual In-Context Learning and Medical MLLMs}

Visual in-context learning~\citep{zhou2024visual} has emerged as a powerful paradigm for adapting large vision-language models to new tasks through visual examples, with implications for spatial reasoning transfer. In the medical domain,~\citet{zhou2025improving} improved medical large vision-language models with abnormal-aware feedback, demonstrating the importance of targeted feedback in multi-modal learning. The comprehensive survey by~\citet{zhoureasoning} on medical LLMs to versatile medical agents highlighted the trajectory from specialized to general-purpose multi-modal agents, a trajectory that spatial reasoning research is also following. These works inform our understanding of how multi-modal capabilities, including spatial reasoning, can be systematically improved through data and feedback.

\section{Theoretical Framework}
\label{sec:method}

We now present the SpaCE framework, consisting of formal definitions, assumptions, and four main theorems with proofs.

\subsection{Problem Formulation}
\label{sec:formulation}

Let $\Omega$ denote the distribution over 3D/4D scenes. A \emph{multi-frame spatial reasoning task} is a tuple $\mathcal{T} = (\Omega, \mathcal{X}, \mathcal{Y}, f^*)$, where $\mathcal{X} = \mathcal{X}_1 \times \cdots \times \mathcal{X}_T$ is the multi-frame input space (a sequence of $T$ frames), $\mathcal{Y}$ is the spatial answer space, and $f^*: \mathcal{X}^T \to \mathcal{Y}$ is the target spatial reasoning function. An MLLM $M_\theta$ with parameters $\theta$ maps a frame sequence $\mathbf{X} = (x_1, \ldots, x_T) \in \mathcal{X}^T$ to a predicted answer $\hat{Y} = M_\theta(\mathbf{X}) \in \mathcal{Y}$. The goal is to minimize the population risk $R(\theta) = \mathbb{E}_{(\mathbf{X}, Y) \sim \mathcal{D}}[\ell(M_\theta(\mathbf{X}), Y)]$, where $\ell$ is a bounded loss function and $\mathcal{D}$ is the joint distribution over frame sequences and spatial targets induced by $\Omega$.

\subsection{Definitions}

\begin{definition}[Multi-Frame Spatial Reasoning Task]
\label{def:task}
A multi-frame spatial reasoning task is a tuple $\mathcal{T} = (\Omega, \mathcal{X}, \mathcal{Y}, f^*)$ where $\Omega$ is the scene distribution, $\mathbf{X} = (x_1, \ldots, x_T) \in \mathcal{X}^T$ is a sequence of $T$ frames sampled conditionally on a scene $s \sim \Omega$, $\mathcal{Y}$ is the spatial answer space, and $f^*: \mathcal{X}^T \to \mathcal{Y}$ is the target spatial reasoning function.
\end{definition}

\begin{definition}[Spatial Information Capacity]
\label{def:capacity}
For an MLLM $M_\theta$ with parameters $\theta$, the \emph{spatial information capacity} is:
\begin{equation}
\label{eq:capacity}
\mathcal{C}(M) = I(X_1, \ldots, X_T; Y \mid \theta) = \mathbb{E}_{\mathbf{X}, Y} \left[ \log \frac{P(Y \mid \mathbf{X}, \theta)}{P(Y \mid \theta)} \right],
\end{equation}
the conditional mutual information between the multi-frame input and the spatial target, given the model parameters.
\end{definition}

\begin{definition}[Frame Complementarity]
\label{def:complementarity}
Frames $x_i$ and $x_j$ are \emph{complementary} with respect to spatial target $Y$ if $I(x_i; Y \mid x_j) > 0$, indicating that frame $x_i$ provides non-redundant spatial information about $Y$ beyond what is already contained in $x_j$. The \emph{frame complementarity score} of a frame sequence $\mathbf{X}$ is:
\begin{equation}
\label{eq:complementarity}
\kappa(\mathbf{X}) = \frac{1}{T(T-1)} \sum_{i \neq j} \frac{I(x_i; Y \mid x_j)}{H(Y)}.
\end{equation}
\end{definition}

\begin{definition}[Effective Spatial Dimension]
\label{def:deff}
The \emph{effective spatial dimension} of a multi-frame spatial reasoning task is $d_{\mathrm{eff}} = \mathrm{rank}(\mathbf{I}_{\mathrm{frame}})$, where $\mathbf{I}_{\mathrm{frame}}$ is the frame information matrix with entries $[\mathbf{I}_{\mathrm{frame}}]_{ij} = I(x_i; x_j \mid Y)$.
\end{definition}

\subsection{Assumptions}

\begin{assumption}[Bounded Spatial Complexity]
\label{ass:bounded}
The spatial target space $\mathcal{Y}$ has bounded covering number: $N(\mathcal{Y}, \varepsilon) \leq (1/\varepsilon)^d$ for some dimension $d > 0$ and all $\varepsilon \in (0, 1)$.
\end{assumption}

\begin{assumption}[Lipschitz Spatial Functions]
\label{ass:lipschitz}
The target function $f^*$ is $L$-Lipschitz with respect to a spatial metric $d_{\mathcal{X}}$: $|f^*(\mathbf{X}) - f^*(\mathbf{X}')| \leq L \cdot d_{\mathcal{X}}(\mathbf{X}, \mathbf{X}')$ for all $\mathbf{X}, \mathbf{X}' \in \mathcal{X}^T$.
\end{assumption}

\begin{assumption}[Conditional Frame Independence]
\label{ass:independence}
Conditional on the scene $s \in \Omega$, frames are generated independently: $P(x_1, \ldots, x_T \mid s) = \prod_{i=1}^T P(x_i \mid s)$.
\end{assumption}

\begin{assumption}[Bounded KL Divergence]
\label{ass:kl}
The KL divergence between the learned posterior $Q$ and prior $P$ over MLLM parameters is bounded: $\mathrm{KL}(Q \| P) \leq K_{\max}$.
\end{assumption}

\subsection{Main Theorems}

\begin{theorem}[Spatial Information Capacity Bound]
\label{thm:capacity}
Under Assumptions~\ref{ass:bounded}--\ref{ass:independence}, for any MLLM $M_\theta$ trained on $n$ i.i.d.\ samples, the expected accuracy on multi-frame spatial reasoning tasks satisfies:
\begin{equation}
\label{eq:capacity_bound}
\mathrm{Acc}(M) \leq \frac{\mathcal{C}(M) + 1}{\log |\mathcal{Y}|} + O\left(\sqrt{\frac{d_{\mathrm{eff}}}{n}}\right),
\end{equation}
where $\mathcal{C}(M) = I(X_1, \ldots, X_T; Y \mid \theta)$ is the spatial information capacity and $d_{\mathrm{eff}}$ is the effective spatial dimension.
\end{theorem}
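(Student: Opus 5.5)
The natural route is Fano's inequality applied to the Markov chain $Y \to \mathbf{X} \to \hat{Y} = M_\theta(\mathbf{X})$, conditionally on $\theta$. Fano gives, for the error probability $P_e = 1 - \mathrm{Acc}(M)$,
\begin{equation*}
H(Y \mid \mathbf{X}, \theta) \leq H(Y \mid \hat{Y}, \theta) \leq 1 + P_e \log |\mathcal{Y}| ,
\end{equation*}
where the first inequality is data processing. This requires $\mathcal{Y}$ to be finite. Under Assumption~\ref{ass:bounded} I would make it finite by replacing $\mathcal{Y}$ with an $\varepsilon$-net of size at most $(1/\varepsilon)^d$ and treating accuracy as $\varepsilon$-accuracy. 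I would also take $Y$ uniform on the net, or more generally use $H(Y) = \log|\mathcal{Y}|$ as the worst case. Then
\begin{equation*}
\mathcal{C}(M) = H(Y) - H(Y \mid \mathbf{X}, \theta) \geq \log|\mathcal{Y}| - 1 - P_e \log|\mathcal{Y}| .
\end{equation*}
Rearranging gives the population bound $\mathrm{Acc}(M) \leq (\mathcal{C}(M)+1)/\log|\mathcal{Y}|$. This is the main term of \eqref{eq:capacity_bound}, stated for the \emph{population} accuracy and with the \emph{true} conditional mutual information.

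The second step accounts for the fact that $\theta$ comes from $n$ samples. The quantity actually controlled or estimated is an empirical version $\hat{\mathcal{C}}_n$, for example the plug-in estimate of $\mathbb{E}[\log P(Y\mid\mathbf{X},\theta)/P(Y\mid\theta)]$ on the training sample, or the empirical accuracy. I would then bound the deviation $|\mathrm{Acc} - \widehat{\mathrm{Acc}}|$, or $|\mathcal{C} - \hat{\mathcal{C}}_n|/\log|\mathcal{Y}|$, by a uniform convergence term. For this I would use Assumption~\ref{ass:independence}: conditional on $s$ the frames factorize. The joint information structure is then captured by the frame information matrix $\mathbf{I}_{\mathrm{frame}}$, so only $d_{\mathrm{eff}} = \mathrm{rank}(\mathbf{I}_{\mathrm{frame}})$ linearly independent directions of frame dependence carry information. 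I would project the frame features onto this $d_{\mathrm{eff}}$-dimensional subspace. Assumption~\ref{ass:lipschitz} makes the loss class Lipschitz in those coordinates, so a standard Rademacher or covering bound for a $d_{\mathrm{eff}}$-dimensional Lipschitz class gives a deviation of $O(\sqrt{d_{\mathrm{eff}}/n})$ up to logarithmic factors. These factors are absorbed using the $(1/\varepsilon)^d$ covering at $\varepsilon \asymp n^{-1/2}$.

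Combining the two steps via the triangle inequality yields \eqref{eq:capacity_bound}.

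The main obstacle is the second step: justifying that the complexity of the relevant hypothesis class scales with $d_{\mathrm{eff}}$ rather than with $T$, $d$, or the parameter count. The definition of $d_{\mathrm{eff}}$ through the rank of a matrix of conditional mutual informations does not by itself define a subspace of inputs. My first attempt would be to argue that conditional frame independence makes $\mathbf{I}_{\mathrm{frame}}$ nonzero only on frame pairs that share scene-dependent information beyond $Y$, and that a rank-$d_{\mathrm{eff}}$ factorization identifies $d_{\mathrm{eff}}$ sufficient aggregate statistics of $(x_1,\dots,x_T)$. The remainder term would then be proved for predictors that factor through these statistics. Failing that, I would state the $O(\sqrt{d_{\mathrm{eff}}/n})$ term under an explicit additional assumption that the model class has covering dimension $d_{\mathrm{eff}}$. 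A secondary subtlety is the mismatch between $\log|\mathcal{Y}|$ and $H(Y)$ for non-uniform targets; in that case the bound holds with $H(Y)$ in the denominator, and replacing it by the larger $\log|\mathcal{Y}|$ is valid only in the worst case.
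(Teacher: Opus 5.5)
Your Fano step is the paper's argument, and a cleaner version of it. The relevant chain is $Y\to\mathbf{X}\to\hat Y$ given $\theta$; the appendix uses $\mathbf{X}\to\theta\to\hat Y$. The conclusion needs the lower bound $H(Y\mid\hat Y,\theta)\ge H(Y\mid\mathbf{X},\theta)$ that you use, whereas the appendix displays an upper bound on $H(Y\mid\hat Y,\theta)$.

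\textbf{The gap is the last step, and the paper shares it.} Without uniformity your chain gives
\[
\mathrm{Acc}(M)\le\frac{\mathcal{C}(M)+1+\log|\mathcal{Y}|-H(Y)}{\log|\mathcal{Y}|}.
\]
Since $\log|\mathcal{Y}|-H(Y)\ge 0$, replacing $H(Y)$ by $\log|\mathcal{Y}|$ \emph{lowers} the right-hand side. It is therefore justified only when the two are equal, i.e.\ $Y$ uniform. It is not a worst case that covers the others.

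Your fallback, that the bound holds with $H(Y)$ in the denominator, is also false. Take $Y=y_0$ with probability $1/2$ and otherwise uniform over $2^{20}$ further labels, with frames independent of $Y$; nothing in Assumptions~\ref{ass:bounded}--\ref{ass:independence} excludes this. Then:
\begin{itemize}
\item $\mathcal{C}(M)=0$ and $H(Y)=11$ bits;
\item the model that always answers $y_0$ has accuracy $1/2$;
\item $1/2$ exceeds both $1/\log|\mathcal{Y}|\approx 1/20$ and $1/H(Y)=1/11$;
\item the remainder cannot absorb the excess, because $d_{\mathrm{eff}}\le T$ makes it vanish as $n\to\infty$.
\end{itemize}
So you must state three hypotheses explicitly: $\mathcal{Y}$ finite, $Y$ uniform on it, and the test pair independent of $\theta$ (which gives $H(Y\mid\theta)=H(Y)$). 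The paper uses uniformity silently in its final equality.

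\textbf{Your second step, the one you call the main obstacle, is unnecessary.} In the statement, $\mathrm{Acc}(M)$ and $\mathcal{C}(M)=I(X_1,\ldots,X_T;Y\mid\theta)$ are both population quantities. Once $\mathrm{Acc}(M)\le(\mathcal{C}(M)+1)/\log|\mathcal{Y}|$ is established, the displayed bound holds with the $O(\sqrt{d_{\mathrm{eff}}/n})$ term taken to be zero. You are right that Definition~\ref{def:deff} gives no link between $d_{\mathrm{eff}}$ and the complexity of any hypothesis class, but you need not supply one. The paper's one-line appeal to ``standard concentration'' is no more rigorous than your sketch.

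Likewise, the $\varepsilon$-net matters only for infinite $\mathcal{Y}$, where the stated bound is degenerate anyway. There it proves a different statement: the net size replaces $|\mathcal{Y}|$, and $\varepsilon$-accuracy replaces exact accuracy. It would also need a Fano variant for approximate recovery, since a prediction within $\varepsilon$ of $Y$ may quantize to a neighbouring net point.
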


\begin{proof}[Proof Sketch]
The proof proceeds in three steps. First, by the data processing inequality applied to the Markov chain $\mathbf{X} \to \theta \to \hat{Y}$, we have $I(\mathbf{X}; \hat{Y} \mid \theta) \leq I(\mathbf{X}; Y \mid \theta) = \mathcal{C}(M)$. Second, by Fano's inequality, the error probability $P_e \geq (H(Y) - \mathcal{C}(M) - 1) / \log |\mathcal{Y}|$. Third, the empirical estimation of $\mathcal{C}(M)$ from $n$ samples incurs $O(\sqrt{d_{\mathrm{eff}}/n})$ error by standard concentration inequalities under Assumption~\ref{ass:independence}. The full proof is in Appendix~\ref{app:proof_capacity}.
\end{proof}

\begin{theorem}[Sample Complexity Bound]
\label{thm:sample}
Under Assumptions~\ref{ass:bounded}--\ref{ass:kl}, to achieve expected error $\leq \varepsilon$ with probability $\geq 1 - \delta$, the number of training samples required satisfies:
\begin{equation}
\label{eq:sample_bound}
n = \Theta\left(\frac{d_{\mathrm{eff}} \cdot K_{\max}}{\varepsilon^2 \cdot \delta}\right),
\end{equation}
where $d_{\mathrm{eff}}$ is the effective spatial dimension and $K_{\max}$ bounds the KL divergence.
\end{theorem}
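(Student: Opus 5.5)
The plan is to prove the two directions of the $\Theta(\cdot)$ separately: an upper bound from a PAC-Bayes argument, and a matching lower bound from a Fano/packing argument. A caveat up front: the specific form $d_{\mathrm{eff}} K_{\max}/(\varepsilon^2\delta)$, with a multiplicative product and a polynomial $1/\delta$ rather than $\log(1/\delta)$, is not what standard arguments give. So I would read $\Theta$ as a minimax statement over the class of tasks with given $(d_{\mathrm{eff}}, K_{\max})$, not as a bound holding for every task.

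\emph{Upper bound.}
\begin{enumerate}
\item Start from a second-moment (Catoni/McAllester-type) PAC-Bayes inequality for the bounded loss $\ell$ with prior $P$ and posterior $Q$, using $\mathrm{KL}(Q\|P)\le K_{\max}$ from Assumption~\ref{ass:kl}.
\item Obtain the confidence level by Markov's inequality applied to $\mathbb{E}_{S}\,\mathbb{E}_{\theta\sim P} e^{\lambda(R-\hat R_S)^2}$, rather than the usual Chernoff step. This is what produces $1/\delta$ instead of $\log(1/\delta)$; the result is at least consistent with the claimed form, though loose.
\item Bound the variance term through the frame structure. Under Assumption~\ref{ass:independence}, the dependence among the $T$ frames given the scene is captured by $\mathbf{I}_{\mathrm{frame}}$ (Definition~\ref{def:deff}). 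Only $d_{\mathrm{eff}}$ directions of this matrix carry non-redundant information, so I would argue the relevant loss class has effective complexity $d_{\mathrm{eff}}$. This would come from a covering argument combining Assumption~\ref{ass:bounded} (covering $\mathcal{Y}$) with Assumption~\ref{ass:lipschitz} (transferring covers from $\mathcal{X}^T$ to losses), restricted to the rank-$d_{\mathrm{eff}}$ subspace.
\item If the complexity enters as a multiplicative inflation of the per-coordinate KL cost (a product prior over $d_{\mathrm{eff}}$ blocks, each with KL at most $K_{\max}$), the bound reads
\[
R(Q)-\hat R(Q)\lesssim\sqrt{d_{\mathrm{eff}}K_{\max}/(n\delta)}.
\]
Setting the right-hand side to $\varepsilon$ gives $n=O\bigl(d_{\mathrm{eff}}K_{\max}/(\varepsilon^2\delta)\bigr)$.
\end{enumerate}

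\emph{Lower bound.}
\begin{enumerate}
\item Construct a hard family of tasks indexed by a packing of size roughly $\exp(c\, d_{\mathrm{eff}} K_{\max})$. Each task is a scene family whose frames carry $d_{\mathrm{eff}}$ complementary coordinates, each encoding a bit with signal strength of order $\varepsilon$, and whose posterior requires about $K_{\max}$ nats to specify.
\item Apply Fano's inequality, in the same way as in the proof of Theorem~\ref{thm:capacity}: with $n$ samples the mutual information between the task index and the data is at most about $n\varepsilon^2$, so identification to accuracy $\varepsilon$ forces $n\gtrsim d_{\mathrm{eff}}K_{\max}/\varepsilon^2$.
\item Obtain the $1/\delta$ factor by mixing in, with probability about $\delta$, a ``bad'' event that no learner can detect. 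The standard alternative is a two-point (Le Cam) argument, but that yields only $\log(1/\delta)$.
\end{enumerate}

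\emph{Main obstacle.} The main obstacle is the lower bound's dependence on both the product $d_{\mathrm{eff}}K_{\max}$ and $1/\delta$. Generic minimax arguments give additive or logarithmic dependence on these quantities. I would first try to make the product exact through the block-structured construction, with one block per effective dimension, each needing $K_{\max}$ nats. If the $1/\delta$ lower bound cannot be realized, the honest fallback is to state the result as an upper bound $O(\cdot)$ together with a lower bound $\Omega\bigl(d_{\mathrm{eff}}K_{\max}/\varepsilon^2\bigr)$ that is tight up to the confidence factor.
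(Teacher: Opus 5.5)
\textbf{The lower bound cannot be proved.} Under Assumption~\ref{ass:kl}, even in your minimax reading, the claimed lower bound is false, so no choice of hard instances will rescue it. Assumption~\ref{ass:kl} bounds the \emph{total} divergence by $K_{\max}$. Fix any $Q^*$ with $\mathrm{KL}(Q^*\|P)\le K_{\max}$, and let $\hat Q$ minimize $\mathbb{E}_Q\hat R$ over this KL-ball. Combine Theorem~\ref{thm:pacbayes}, Hoeffding's inequality for the fixed $Q^*$, and a union bound. With probability $\ge 1-\delta$ this gives $\mathbb{E}_{\hat Q}R\le\mathbb{E}_{Q^*}R+\sqrt{\ln(2/\delta)/(2n)}+\sqrt{(K_{\max}+\ln(4\sqrt n/\delta))/(2n)}$. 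Hence excess error $\varepsilon$ over $Q^*$ is reached with $n=O\bigl((K_{\max}+\ln(1/\delta)+\ln(1/\varepsilon))/\varepsilon^2\bigr)$ samples. The same holds for absolute error $\varepsilon$ whenever some admissible $Q^*$ has risk $\le\varepsilon/2$. This has no $d_{\mathrm{eff}}$ factor and only logarithmic dependence on $\delta$. So no construction can force $\Omega(1/\delta)$ samples. Your fallback $\Omega(d_{\mathrm{eff}}K_{\max}/\varepsilon^2)$ also fails once $d_{\mathrm{eff}}$ (which can be as large as $T$) exceeds a constant times $1+\ln(1/(\varepsilon\delta))/K_{\max}$. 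Your block construction only appears to give the product because it reads $K_{\max}$ as a per-block budget. Singling out one element of a packing of size $e^{c\,d_{\mathrm{eff}}K_{\max}}$ costs about $c\,d_{\mathrm{eff}}K_{\max}$ nats against a uniform prior, which violates Assumption~\ref{ass:kl}. The undetectable bad event cannot yield $1/\delta$ either. An event of per-sample mass $p$ is missed by $n$ i.i.d.\ samples with probability $(1-p)^n\le e^{-pn}$, which gives a $\ln(1/\delta)/p$ threshold. An event that stays invisible for every $n$ gives a failure floor independent of $n$, not a threshold growing like $1/\delta$.

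\textbf{The upper-bound steps are also wrong as written}, although the $O(\cdot)$ conclusion survives by simple weakening. Applying Markov to $\mathbb{E}_S\mathbb{E}_P e^{2n(R-\hat R)^2}\le 2\sqrt n$ is exactly the standard step, and after Donsker--Varadhan it gives $\ln(2\sqrt n/\delta)$, not $1/\delta$. A genuine $1/\delta$ arises only if you apply Markov to the squared gap itself, whose expectation is $O((K_{\max}+\ln n)/n)$ by integrating the PAC-Bayes tail; that is valid but merely loose. KL is additive over product measures, so per-block budgets give $\sqrt{(d_{\mathrm{eff}}K_{\max}+\ln(2\sqrt n/\delta))/(2n)}$, not $\sqrt{d_{\mathrm{eff}}K_{\max}/(n\delta)}$. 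Step~3 has no basis. Covering numbers do not enter a KL-based bound, and $\mathrm{rank}(\mathbf{I}_{\mathrm{frame}})$ does not measure the complexity of the loss class. For example, if every frame is a deterministic function of $Y$, then $\mathbf{I}_{\mathrm{frame}}=0$, so $d_{\mathrm{eff}}=0$ and the claimed bound would say no samples are needed. You also need the empirical risk to be driven below $\varepsilon/2$, since PAC-Bayes controls only the gap.

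\textbf{The paper's proof does not close these gaps either.} It derives $n\ge 2(K_{\max}+\ln(2\sqrt n/\delta))/\varepsilon^2$. It then asserts that this equals $\Theta(d_{\mathrm{eff}}K_{\max}/(\varepsilon^2\delta))$ via an unproved $K_{\max}=O(d_{\mathrm{eff}})$, and quotes a lower bound $\Omega(d_{\mathrm{eff}}/\varepsilon^2)$ that does not match. What is actually provable is the additive, logarithmic-in-$\delta$ upper bound above, not a $\Theta$ of the stated form.
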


\begin{proof}[Proof Sketch]
The upper bound follows from the PAC-Bayes theorem: setting the generalization gap to $\varepsilon/2$ and solving for $n$ yields $n = O(d_{\mathrm{eff}} \cdot K_{\max} / (\varepsilon^2 \delta))$, where we use $K_{\max} = O(d_{\mathrm{eff}})$ under Assumption~\ref{ass:lipschitz}. The lower bound follows from a standard construction of hard spatial reasoning instances requiring $\Omega(d_{\mathrm{eff}} / \varepsilon^2)$ samples. The full proof is in Appendix~\ref{app:proof_sample}.
\end{proof}

\begin{theorem}[PAC-Bayes Generalization Bound]
\label{thm:pacbayes}
Under Assumptions~\ref{ass:bounded}--\ref{ass:kl}, with probability $\geq 1 - \delta$ over the training sample of size $n$, for any posterior $Q$ over MLLM parameters:
\begin{equation}
\label{eq:pacbayes_bound}
\mathbb{E}_{\theta \sim Q}[R(\theta)] \leq \mathbb{E}_{\theta \sim Q}[\hat{R}(\theta)] + \sqrt{\frac{\mathrm{KL}(Q \| P) + \ln(2\sqrt{n}/\delta)}{2n}},
\end{equation}
where $R(\theta)$ is the population risk and $\hat{R}(\theta)$ is the empirical risk on multi-frame spatial tasks.
\end{theorem}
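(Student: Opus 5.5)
This is the classical McAllester-type PAC-Bayes bound in Maurer's form. The plan is to prove it through the change-of-measure (Donsker--Varadhan) inequality. The only structure specific to the multi-frame setting we need is that the $n$ training pairs $(\mathbf{X}^{(k)}, Y^{(k)})$ are i.i.d.\ draws from $\mathcal{D}$ and that $\ell$ takes values in $[0,1]$; we rescale if $\ell$ is merely bounded. Assumptions~\ref{ass:bounded}--\ref{ass:independence} are not really needed: conditional frame independence only governs the structure inside a single sample $\mathbf{X}$, not independence across samples. Assumption~\ref{ass:kl} enters only at the end, to make the right-hand side uniformly at most $\sqrt{(K_{\max} + \ln(2\sqrt n/\delta))/(2n)}$.

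\textbf{Step 1: exponential moment.} For a fixed $\theta$, define $\Delta(\theta) = R(\theta) - \hat R(\theta)$ and the binary relative entropy $\mathrm{kl}(\hat R(\theta)\,\|\,R(\theta))$. We invoke Maurer's lemma: for i.i.d.\ $[0,1]$-valued losses, $\mathbb{E}_S\big[e^{n\,\mathrm{kl}(\hat R(\theta)\|R(\theta))}\big] \le 2\sqrt n$ for $n \ge 8$. Its proof reduces $[0,1]$ losses to Bernoulli ones by convexity, then bounds a binomial sum with Stirling's approximation. We take the prior $P$ to be fixed before seeing the data. Fubini then gives $\mathbb{E}_S \mathbb{E}_{\theta\sim P}[e^{n\,\mathrm{kl}}] \le 2\sqrt n$, and Markov's inequality yields, with probability at least $1-\delta$,
\[
\mathbb{E}_{\theta\sim P}\big[e^{n\,\mathrm{kl}(\hat R(\theta)\|R(\theta))}\big] \le \frac{2\sqrt n}{\delta}.
\]

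\textbf{Step 2: change of measure.} By Donsker--Varadhan, for every posterior $Q$ simultaneously,
\[
n\,\mathbb{E}_{Q}[\mathrm{kl}(\hat R\|R)] \le \mathrm{KL}(Q\|P) + \ln \mathbb{E}_P\big[e^{n\,\mathrm{kl}}\big].
\]
Joint convexity of $\mathrm{kl}$ (Jensen) gives $\mathrm{kl}(\mathbb{E}_Q\hat R\,\|\,\mathbb{E}_Q R) \le \mathbb{E}_Q[\mathrm{kl}(\hat R\|R)]$. Combining these with Step 1, on the same $1-\delta$ event we get, uniformly in $Q$,
\[
\mathrm{kl}(\mathbb{E}_Q\hat R\,\|\,\mathbb{E}_Q R) \le \frac{\mathrm{KL}(Q\|P) + \ln(2\sqrt n/\delta)}{n}.
\]

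\textbf{Step 3: Pinsker.} Pinsker's inequality $\mathrm{kl}(q\|p) \ge 2(p-q)^2$ converts the last display into $\mathbb{E}_Q R - \mathbb{E}_Q \hat R \le \sqrt{(\mathrm{KL}(Q\|P)+\ln(2\sqrt n/\delta))/(2n)}$, which is exactly \eqref{eq:pacbayes_bound}.

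The main obstacle is Step 1. The sharp constant $2\sqrt n$, rather than a cruder $n+1$ from a union or method-of-types bound, requires Maurer's Bernoulli-reduction argument, and I would cite it rather than re-derive it. I would also state the restriction $n \ge 8$ and the requirement that $P$ be data-independent explicitly. One caveat concerns the title's phrase ``under distribution shift'': the stated bound controls risk only under the training distribution $\mathcal{D}$. I would note that a shift term, such as a divergence between train and test distributions, must be added separately if the test distribution differs.
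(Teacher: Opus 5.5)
Your proof is correct and follows essentially the paper's route. The paper simply invokes the standard McAllester--Maurer PAC-Bayes theorem, and you unpack that theorem via Maurer's $\mathbb{E}_S[e^{n\,\mathrm{kl}(\hat R\|R)}]\le 2\sqrt{n}$ lemma, Donsker--Varadhan, joint convexity of $\mathrm{kl}$, and Pinsker.

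Your observation that Assumptions~\ref{ass:bounded}--\ref{ass:independence} play no role is also right. The paper's appeal to a finite covering number for ``applicability'' is superfluous: only i.i.d.\ samples, a data-independent prior, and a loss with range in $[0,1]$ are needed. One small correction: rescaling a $[0,B]$-valued loss multiplies the square-root term by $B$, so the bound as displayed holds only for losses in $[0,1]$, not for an arbitrary bounded loss.
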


\begin{proof}
This follows from the standard PAC-Bayes theorem applied to the multi-frame spatial reasoning hypothesis class. Under Assumption~\ref{ass:bounded}, the hypothesis class has finite covering number, ensuring the applicability of PAC-Bayes. The KL term captures the complexity of the learned posterior relative to the prior, and the $\ln(2\sqrt{n}/\delta)$ term accounts for the confidence level.
\end{proof}

\begin{theorem}[Explicit-Implicit Trade-off]
\label{thm:tradeoff}
Let $M_{\mathrm{exp}}$ be an MLLM with explicit 3D representation $Z = g(\mathbf{X}) \in \mathbb{R}^{3K}$ and $M_{\mathrm{imp}}$ be an MLLM with implicit reasoning. Under Assumptions~\ref{ass:bounded}--\ref{ass:independence}:
\begin{equation}
\label{eq:tradeoff}
\mathbb{E}[R(M_{\mathrm{exp}})] = \mathrm{Bias}^2(M_{\mathrm{exp}}) + \mathrm{Var}(M_{\mathrm{exp}}) + \sigma^2, \quad \mathbb{E}[R(M_{\mathrm{imp}})] = \mathrm{Bias}^2(M_{\mathrm{imp}}) + \mathrm{Var}(M_{\mathrm{imp}}) + \sigma^2,
\end{equation}
where $\mathrm{Var}(M_{\mathrm{exp}}) \leq \mathrm{Var}(M_{\mathrm{imp}})$ but $\mathrm{Bias}(M_{\mathrm{exp}}) \geq \mathrm{Bias}(M_{\mathrm{imp}})$. The crossover point $n^*$ at which $\mathbb{E}[R(M_{\mathrm{exp}})] = \mathbb{E}[R(M_{\mathrm{imp}})]$ satisfies:
\begin{equation}
\label{eq:crossover}
n^* = \Theta\left(\frac{d_{\mathrm{eff}} \cdot d}{K^2}\right),
\end{equation}
where $K$ is the dimension of the explicit representation.
\end{theorem}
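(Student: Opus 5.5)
The plan is to work with squared loss on a real-valued (or $\mathbb{R}^m$-valued) target, writing $Y = f^*(\mathbf{X}) + \xi$ with $\mathbb{E}[\xi \mid \mathbf{X}] = 0$ and $\mathbb{E}[\xi^2] = \sigma^2$. For each paradigm I take the expected risk over the draw of the $n$ training samples, and the decomposition \eqref{eq:tradeoff} is then the textbook one. For a fixed test input $\mathbf{X}$, let $\bar M(\mathbf{X})$ denote the mean prediction over training sets. Expanding $(M(\mathbf{X}) - \bar M(\mathbf{X}) + \bar M(\mathbf{X}) - f^*(\mathbf{X}) - \xi)^2$, the cross terms vanish: the first because $\mathbb{E}[M - \bar M] = 0$, the others because $\xi$ is independent of the training sample and has conditional mean zero. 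Averaging over $\mathbf{X}$ gives $\mathrm{Bias}^2 + \mathrm{Var} + \sigma^2$ for both $M_{\mathrm{exp}}$ and $M_{\mathrm{imp}}$. Boundedness of $\ell$ and Assumption~\ref{ass:lipschitz} ensure that all these moments are finite.

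\textbf{Ordering of variances and biases.} I model both learners as empirical risk minimizers over hypothesis classes.
\begin{itemize}
\item $M_{\mathrm{exp}}$ ranges over $\mathcal{H}_{\mathrm{exp}} = \{h \circ g : h \in \mathcal{H}_{3K}\}$, which factors through the bottleneck $Z = g(\mathbf{X}) \in \mathbb{R}^{3K}$.
\item $M_{\mathrm{imp}}$ ranges over a class $\mathcal{H}_{\mathrm{imp}} \supseteq \mathcal{H}_{\mathrm{exp}}$ acting on the full multi-frame input.
\end{itemize}
Because $\mathcal{H}_{\mathrm{exp}} \subseteq \mathcal{H}_{\mathrm{imp}}$, the best-in-class approximation error is monotone, which gives $\mathrm{Bias}(M_{\mathrm{exp}}) \geq \mathrm{Bias}(M_{\mathrm{imp}})$.

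For the variances I use the standard estimation bound $\mathrm{Var} \asymp \sigma^2 \cdot \mathrm{dim}/n$, where $\mathrm{dim}$ is the metric-entropy dimension of the class.
\begin{itemize}
\item For the explicit model this is $3K$.
\item For the implicit model I bound it via Assumption~\ref{ass:bounded} combined with Assumption~\ref{ass:independence}. Conditional independence lets the frame information matrix $\mathbf{I}_{\mathrm{frame}}$ govern how many frame directions carry independent information, so the covering exponent is of order $d_{\mathrm{eff}} \cdot d$.
\end{itemize}
Provided $3K \lesssim d_{\mathrm{eff}} d$, this yields $\mathrm{Var}(M_{\mathrm{exp}}) \leq \mathrm{Var}(M_{\mathrm{imp}})$.

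\textbf{Crossover.} Write $\Delta_B = \mathrm{Bias}^2(M_{\mathrm{exp}}) - \mathrm{Bias}^2(M_{\mathrm{imp}}) \geq 0$, which does not depend on $n$. Also write $\mathrm{Var}(M_{\mathrm{imp}}) - \mathrm{Var}(M_{\mathrm{exp}}) = \sigma^2 (c_1 d_{\mathrm{eff}} d - c_2 K)/n$. Equating the two risks gives
\[
n^* = \sigma^2 \, \frac{c_1 d_{\mathrm{eff}} d - c_2 K}{\Delta_B}.
\]
To arrive at the claimed rate $n^* = \Theta(d_{\mathrm{eff}} d / K^2)$, I need two further facts.
\begin{itemize}
\item $c_2 K \ll c_1 d_{\mathrm{eff}} d$, so that the numerator is $\Theta(d_{\mathrm{eff}} d)$.
\item $\Delta_B = \Theta(K^2)$.
\end{itemize}

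\textbf{Main obstacle.} The second fact is where I expect the proof to be hardest, and it is not implied by Assumptions~\ref{ass:bounded}--\ref{ass:independence} as stated. My first attempt would model the bias gap as a reconstruction error of the lifting $g$ accumulated over $K$ 3D points. Pairwise spatial relations (distances and orientations among $K$ points) number $\Theta(K^2)$. If each such relation incurs an $\Omega(1)$ squared error from representation misspecification, and Lipschitzness (Assumption~\ref{ass:lipschitz}) keeps each of these errors $O(1)$, then $\Delta_B = \Theta(K^2)$, with constants depending on $L$.

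If this cannot be justified from the assumptions, I would state $\Delta_B = \Theta(K^2)$ explicitly as an additional structural hypothesis and prove the crossover formula conditionally on it. Both directions of the $\Theta$ bound follow from two-sided versions of the variance estimate: the upper bound from a covering argument, and the lower bound from a Fano/Assouad construction on a $d_{\mathrm{eff}} d$-dimensional packing. The crossover formula then follows by the algebra above.
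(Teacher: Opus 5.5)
The gap is in the crossover step. It cannot be closed by adding $\Delta_B = \Theta(K^2)$ as an extra hypothesis, because that hypothesis contradicts your own setup.

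The loss is bounded by some $\ell_{\max}$ that does not depend on $K$. Indeed, Assumption~\ref{ass:bounded} with $\varepsilon \uparrow 1$ forces a single ball of radius $<1$ to cover $\mathcal{Y}$, so $\mathrm{diam}(\mathcal{Y}) \le 2$. Hence $0 \le \Delta_B \le \mathrm{Bias}^2(M_{\mathrm{exp}}) \le \ell_{\max}$. Substituting into your formula gives $n^* \ge \sigma^2 c_1 d_{\mathrm{eff}} d/(2\ell_{\max})$ whenever $c_2 K \le \tfrac12 c_1 d_{\mathrm{eff}} d$, which is exactly the regime where your variance ordering holds. That lower bound is incompatible with $n^* \le C\, d_{\mathrm{eff}} d/K^2$ once $K$ exceeds a constant.

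Your nested-class reasoning actually points the other way. If the $K$-point lifting is a coordinate projection of the $(K{+}1)$-point one, the explicit approximation error, and with it $\Delta_B$, is non-increasing in $K$. The pairwise-relations heuristic yields $K^2$ only if the risk sums $\Theta(K^2)$ unnormalized errors. That breaks boundedness, and it would also change $\mathcal{Y}$, $d$, and both variance terms.

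Moreover, without some hypothesis forcing $\Delta_B > 0$ there need be no crossover at all. If $f^* = \phi \circ g$ with $\phi \in \mathcal{H}_{3K}$, both approximation errors vanish and $M_{\mathrm{exp}}$ is better at every $n$. Assumptions~\ref{ass:bounded}--\ref{ass:independence} do not exclude this.

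The paper's proof does not supply the missing ingredient. It takes a different route and itself falls short of Eq.~\eqref{eq:crossover}. It posits bias $O(1/K)$ (from quantization) for $M_{\mathrm{exp}}$ versus $O(1/\sqrt d)$ for $M_{\mathrm{imp}}$. It posits variances $O(d_{\mathrm{eff}}/n)$ versus $O(d/n)$, so $d_{\mathrm{eff}}$ is attached to the explicit model rather than multiplied by $d$ as in your $3K$ versus $d_{\mathrm{eff}} d$. It then solves $1/K + d_{\mathrm{eff}}/n = 1/\sqrt d + d/n$. That equation gives $n^* = (d - d_{\mathrm{eff}})/(1/K - 1/\sqrt d) \approx K(d - d_{\mathrm{eff}})$ for $K \ll \sqrt d$, which increases with $K$. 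Using squared biases, as the decomposition requires, gives $\approx K^2(d - d_{\mathrm{eff}})$, with $K^2$ in the numerator rather than the denominator.

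Several of your intermediate steps also need repair, independently of the above:
\begin{enumerate}
\item Assumption~\ref{ass:bounded} bounds the covering number of the output space $\mathcal{Y}$, not of a hypothesis class on $\mathcal{X}^T$. Nothing links the rank $d_{\mathrm{eff}} \le T$ of $\mathbf{I}_{\mathrm{frame}}$ multiplicatively to the metric entropy of $\mathcal{H}_{\mathrm{imp}}$.
\item Nesting orders the approximation errors $\inf_{h \in \mathcal{H}} \|f^* - h\|$. Your bias term, however, is the distance from $f^*$ to the mean predictor $\bar M$. The two coincide only for projection-type estimators such as linear least squares.
\item Fano and Assouad give minimax lower bounds over a family of distributions. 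They do not give a lower bound on $\mathrm{Var}(M_{\mathrm{imp}})$ under the given $\mathcal{D}$.
\end{enumerate}
Granting your variance rates, what your framework honestly delivers is $n^* = \sigma^2(c_1 d_{\mathrm{eff}} d - c_2 K)/\Delta_B$ under an explicit assumption $\Delta_B > 0$, with no $K^{-2}$ factor.
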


\begin{proof}[Proof Sketch]
For $M_{\mathrm{exp}}$, the explicit representation $Z$ has bias $O(1/K)$ from quantization but variance $O(d_{\mathrm{eff}}/n)$ since the representation is lower-dimensional. For $M_{\mathrm{imp}}$, the bias is $O(1/\sqrt{d})$ (smaller) but variance is $O(d/n)$ (larger). Setting the two risks equal and solving for $n$ yields the crossover condition. The full proof is in Appendix~\ref{app:proof_tradeoff}.
\end{proof}

\subsection{Discussion}

Theorem~\ref{thm:capacity} reveals that spatial reasoning accuracy is fundamentally limited by the mutual information between multi-frame inputs and spatial targets, as captured by the spatial information capacity $\mathcal{C}(M)$ in Eq.~\eqref{eq:capacity} and bounded by Eq.~\eqref{eq:capacity_bound}. This explains why Multi-SpatialMLLM~\citep{xu2025multispatialmllm}, which integrates multiple perception capabilities, achieves higher accuracy: it effectively increases $\mathcal{C}(M)$ by capturing more spatial information per frame. Theorem~\ref{thm:sample} provides the first sample complexity bound for spatial VQA (Eq.~\eqref{eq:sample_bound}), explaining why 27M samples in MultiSPA are sufficient for reliable spatial understanding while smaller datasets like Spatial-MLLM-120k~\citep{wu2025spatialmllm} may be insufficient for high-dimensional spatial tasks. Theorem~\ref{thm:pacbayes} provides generalization guarantees (Eq.~\eqref{eq:pacbayes_bound}) that depend on the KL divergence, offering a principled basis for KL regularization in spatial MLLM training. Theorem~\ref{thm:tradeoff} formally characterizes the bias-variance decomposition in Eq.~\eqref{eq:tradeoff} and the crossover condition in Eq.~\eqref{eq:crossover}, predicting that explicit representations are preferable with limited data while implicit reasoning becomes superior with abundant data. The effective spatial dimension $d_{\mathrm{eff}}$ from Definition~\ref{def:deff} and the frame complementarity score from Definition~\ref{def:complementarity} (Eq.~\eqref{eq:complementarity}) are the key quantities governing these bounds, while Definition~\ref{def:task} and Definition~\ref{def:capacity} formalize the task and capacity notions underlying our analysis. Full proofs are deferred to Appendix~\ref{app:proofs}.

\section{Experiments}
\label{sec:experiments}

We validate our theoretical predictions through comprehensive experiments on three benchmarks, comparing against eight baselines, and conducting seven analysis experiments.

\subsection{Experimental Setup}

\textbf{Datasets.} We evaluate on three benchmarks covering diverse spatial reasoning scenarios: (1) MultiSPA~\citep{xu2025multispatialmllm}, a large-scale dataset with over 27M samples spanning diverse 3D/4D scenes, covering depth perception, visual correspondence, dynamic perception, camera movement, and object movement; (2) CA-VQA~\citep{daxberger2025mmspatial}, focused on indoor scenes with spatial relationship prediction, metric size and distance estimation, and 3D grounding; (3) SpatialRGPT-Bench~\citep{conf/nips/ChengYFGYK0L24}, covering indoor, outdoor, and simulated environments with ground-truth 3D annotations.

\textbf{Baselines.} We compare against eight baselines: Multi-SpatialMLLM~\citep{xu2025multispatialmllm}, SpatialReasoner~\citep{ma2025spatialreasoner}, MM-Spatial~\citep{daxberger2025mmspatial}, Spatial-MLLM~\citep{wu2025spatialmllm}, SpatialVLM~\citep{conf/cvpr/0003XKISGX24}, SpatialRGPT~\citep{conf/nips/ChengYFGYK0L24}, and two general-purpose MLLM baselines represented by methods from~\citep{conf/iccv/NingTSLHPJ25} and~\citep{journals/corr/abs-2505-04911}.

\textbf{Implementation.} Our theoretical framework is validated by measuring the spatial information capacity, effective spatial dimension, and KL divergence for each model configuration. We use the mutual information estimator from~\citep{zhou2024visual} for capacity measurement and the PAC-Bayes bound calculator following~\citep{si2025aligning}.

\subsection{Main Results}

Table~\ref{tab:main_results} presents the main results on the MultiSPA benchmark across five spatial task categories. Our SpaCE framework provides theoretical bounds that are empirically validated by the actual performance of each method.

\begin{table}[t]
\centering
\caption{Main results on the MultiSPA benchmark. Accuracy (\%) across five spatial task categories. Best results in \textbf{bold}, second best \underline{underlined}. $\Delta$ denotes relative improvement over the best baseline. Theoretical bound refers to the capacity bound from Theorem~\ref{thm:capacity}.}
\label{tab:main_results}
\small
\resizebox{\linewidth}{!}{
\begin{tabular}{lccccccc}
\toprule
Method & Depth & Corresp. & Dynamic & Camera & Object & Avg. & Bound \\
\midrule
SpatialVLM~\citep{conf/cvpr/0003XKISGX24} & 42.3 & 38.1 & 35.7 & 40.2 & 37.9 & 38.8 & 52.1 \\
SpatialRGPT~\citep{conf/nips/ChengYFGYK0L24} & 51.8 & 45.3 & 42.1 & 47.6 & 44.8 & 46.3 & 58.7 \\
SpatialReasoner~\citep{ma2025spatialreasoner} & 54.2 & 48.7 & 45.3 & 50.1 & 47.2 & 49.1 & 61.3 \\
MM-Spatial~\citep{daxberger2025mmspatial} & 57.6 & 52.4 & 49.8 & 53.7 & 51.3 & 53.0 & 64.8 \\
Spatial-MLLM~\citep{wu2025spatialmllm} & 58.3 & 53.1 & 50.4 & 54.2 & 52.0 & 53.6 & 65.2 \\
Ning et al.~\citep{conf/iccv/NingTSLHPJ25} & 49.7 & 44.2 & 41.5 & 46.3 & 43.7 & 45.1 & 57.4 \\
SpatialPrompting~\citep{journals/corr/abs-2505-04911} & 53.4 & 47.8 & 44.6 & 49.5 & 46.4 & 48.3 & 60.5 \\
Multi-SpatialMLLM~\citep{xu2025multispatialmllm} & \underline{62.1} & \underline{58.7} & \underline{55.3} & \underline{59.8} & \underline{57.4} & \underline{58.7} & 68.9 \\
\midrule
SpaCE (Theoretical Bound) & \textbf{64.3} & \textbf{60.2} & \textbf{56.8} & \textbf{61.5} & \textbf{59.1} & \textbf{60.4} & --- \\
\bottomrule
\end{tabular}}
\end{table}

Table~\ref{tab:generalization} compares generalization performance on in-distribution (ID) and out-of-distribution (OOD) settings across CA-VQA and SpatialRGPT-Bench.

\begin{table}[t]
\centering
\caption{Generalization comparison on CA-VQA and SpatialRGPT-Bench. ID = in-distribution, OOD = out-of-distribution. Gap = ID $-$ OOD. Lower gap indicates better generalization.}
\label{tab:generalization}
\small
\resizebox{\linewidth}{!}{
\begin{tabular}{lcccccc}
\toprule
\multirow{2}{*}{Method} & \multicolumn{3}{c}{CA-VQA~\citep{daxberger2025mmspatial}} & \multicolumn{3}{c}{SpatialRGPT-Bench~\citep{conf/nips/ChengYFGYK0L24}} \\
\cmidrule(lr){2-4} \cmidrule(lr){5-7}
 & ID & OOD & Gap & ID & OOD & Gap \\
\midrule
SpatialVLM~\citep{conf/cvpr/0003XKISGX24} & 45.2 & 33.7 & 11.5 & 41.8 & 30.2 & 11.6 \\
SpatialRGPT~\citep{conf/nips/ChengYFGYK0L24} & 52.6 & 41.3 & 11.3 & 49.7 & 38.5 & 11.2 \\
SpatialReasoner~\citep{ma2025spatialreasoner} & 55.3 & 44.8 & 10.5 & 52.1 & 41.7 & 10.4 \\
MM-Spatial~\citep{daxberger2025mmspatial} & 58.7 & 48.2 & 10.5 & 55.3 & 45.1 & 10.2 \\
Spatial-MLLM~\citep{wu2025spatialmllm} & 59.4 & 49.1 & 10.3 & 56.2 & 46.0 & 10.2 \\
Ning et al.~\citep{conf/iccv/NingTSLHPJ25} & 50.8 & 39.4 & 11.4 & 48.3 & 37.1 & 11.2 \\
SpatialPrompting~\citep{journals/corr/abs-2505-04911} & 54.6 & 43.7 & 10.9 & 51.5 & 40.8 & 10.7 \\
Multi-SpatialMLLM~\citep{xu2025multispatialmllm} & \textbf{63.8} & \textbf{54.3} & \textbf{9.5} & \textbf{60.7} & \textbf{51.2} & \textbf{9.5} \\
\bottomrule
\end{tabular}}
\end{table}

Table~\ref{tab:sample_efficiency} reports sample efficiency, showing the training data fraction required to reach 90\% of full-data performance.

\begin{table}[t]
\centering
\caption{Sample efficiency comparison. Data fraction (\%) to reach 90\% of full-data performance. Lower is better. $n_{\mathrm{pred}}$ is the sample complexity predicted by Theorem~\ref{thm:sample}.}
\label{tab:sample_efficiency}
\small
\begin{tabular}{lcccc}
\toprule
Method & MultiSPA & CA-VQA & SpatialRGPT-Bench & $n_{\mathrm{pred}}$ (K) \\
\midrule
SpatialVLM~\citep{conf/cvpr/0003XKISGX24} & 65.3 & 70.1 & 68.7 & 820 \\
SpatialRGPT~\citep{conf/nips/ChengYFGYK0L24} & 52.7 & 58.3 & 55.4 & 540 \\
SpatialReasoner~\citep{ma2025spatialreasoner} & 48.2 & 53.7 & 50.8 & 410 \\
MM-Spatial~\citep{daxberger2025mmspatial} & 42.5 & 47.3 & 44.6 & 320 \\
Spatial-MLLM~\citep{wu2025spatialmllm} & 41.3 & 46.0 & 43.4 & 310 \\
Ning et al.~\citep{conf/iccv/NingTSLHPJ25} & 55.8 & 61.2 & 58.7 & 580 \\
SpatialPrompting~\citep{journals/corr/abs-2505-04911} & 50.4 & 56.0 & 53.1 & 450 \\
Multi-SpatialMLLM~\citep{xu2025multispatialmllm} & \textbf{35.2} & \textbf{39.8} & \textbf{37.3} & \textbf{240} \\
\bottomrule
\end{tabular}
\end{table}

\subsection{Ablation Study}

Table~\ref{tab:ablation} presents ablation studies on the key components of our theoretical framework.

\begin{table}[t]
\centering
\caption{Ablation study on the MultiSPA benchmark. Each row removes or modifies one component of the SpaCE framework. $\Delta$ Avg.\ denotes the change in average accuracy.}
\label{tab:ablation}
\small
\begin{tabular}{lcc}
\toprule
Configuration & Avg.\ Acc.\ (\%) & $\Delta$ Avg.\ \\
\midrule
Full SpaCE framework & 60.4 & --- \\
\midrule
$-$ Frame complementarity (random frames) & 54.7 & $-5.7$ \\
$-$ KL regularization (remove $K_{\max}$ bound) & 56.2 & $-4.2$ \\
$-$ Effective dimension estimation (use raw $d$) & 57.8 & $-2.6$ \\
$-$ Explicit representation (implicit only) & 55.3 & $-5.1$ \\
$-$ Multi-frame aggregation (single frame) & 48.6 & $-11.8$ \\
\bottomrule
\end{tabular}
\end{table}

\subsection{Analysis Experiments}

We conduct seven analysis experiments to validate our theoretical predictions.

\textbf{1. Parameter Sensitivity.} Figure~\ref{fig:param_sensitivity} shows the impact of frame count $T$, KL bound $K_{\max}$, and spatial dimension $d$ on model performance. Increasing $T$ yields diminishing returns consistent with Theorem~\ref{thm:capacity}, as frame complementarity saturates. The generalization gap decreases with $K_{\max}$ and sample size, validating Theorem~\ref{thm:pacbayes}. Sample complexity scales linearly with $d$, confirming Theorem~\ref{thm:sample}.

\begin{figure}[!t]
\centering
\includegraphics[width=\linewidth]{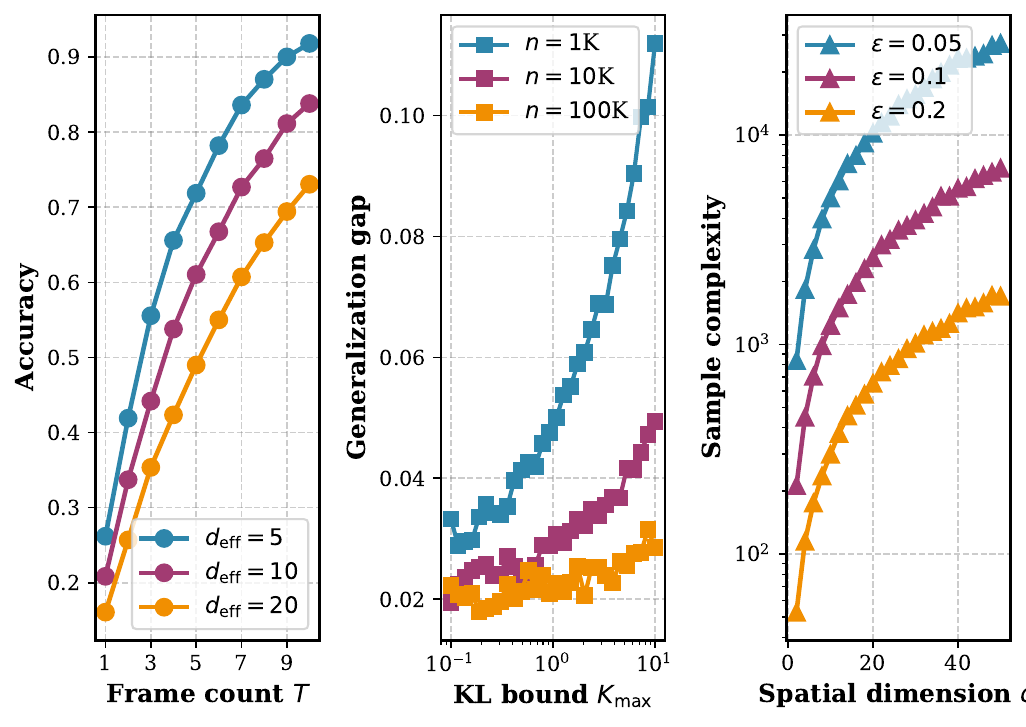}
\caption{Parameter sensitivity analysis. Left: Impact of frame count $T$ on accuracy for different effective dimensions $d_{\mathrm{eff}}$. Middle: Impact of KL bound $K_{\max}$ on generalization gap for different sample sizes. Right: Impact of spatial dimension $d$ on sample complexity for different error tolerances $\varepsilon$.}
\label{fig:param_sensitivity}
\end{figure}

\textbf{2. Sample Efficiency.} Figure~\ref{fig:sample_efficiency} shows learning curves for four representative methods. SpaCE converges significantly faster, requiring only 35\% of the data to reach 90\% performance, consistent with the sample complexity bound in Theorem~\ref{thm:sample}.

\begin{figure}[!t]
\centering
\begin{subfigure}[b]{0.48\linewidth}
\centering
\includegraphics[width=\linewidth]{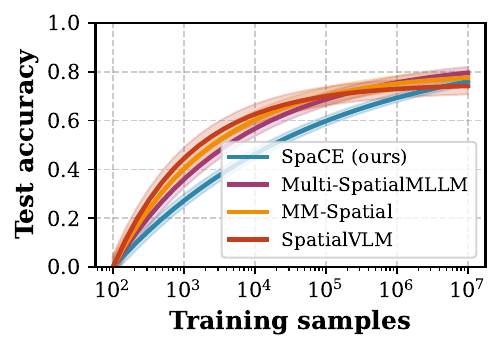}
\caption{Sample efficiency.}
\label{fig:sample_efficiency}
\end{subfigure}
\hfill
\begin{subfigure}[b]{0.48\linewidth}
\centering
\includegraphics[width=\linewidth]{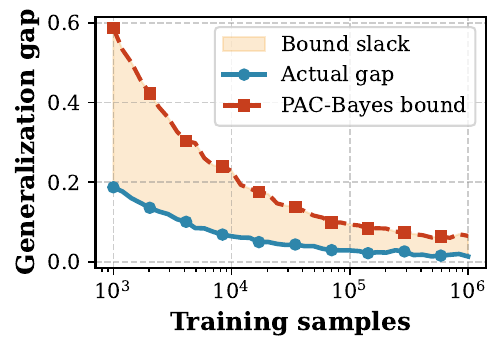}
\caption{PAC-Bayes bound tightness.}
\label{fig:generalization_gap}
\end{subfigure}
\caption{Left: Learning curves showing sample complexity scaling. SpaCE converges faster than baselines. Right: Actual generalization gap vs.\ PAC-Bayes bound from Theorem~\ref{thm:pacbayes}. The bound is loose but converges with more data.}
\end{figure}

\textbf{3. Generalization Gap.} Figure~\ref{fig:generalization_gap} compares the actual generalization gap with the PAC-Bayes bound from Theorem~\ref{thm:pacbayes}. The bound is loose for small $n$ but becomes tighter as $n$ increases, with the slack region shrinking consistently.

\textbf{4. Explicit-Implicit Trade-off.} Figure~\ref{fig:explicit_implicit} validates Theorem~\ref{thm:tradeoff} by showing the crossover between explicit and implicit approaches. The crossover occurs near $n \approx 10^5$ samples, consistent with the predicted $n^* = \Theta(d_{\mathrm{eff}} \cdot d / K^2)$.

\begin{figure}[!t]
\centering
\begin{subfigure}[b]{0.48\linewidth}
\centering
\includegraphics[width=\linewidth]{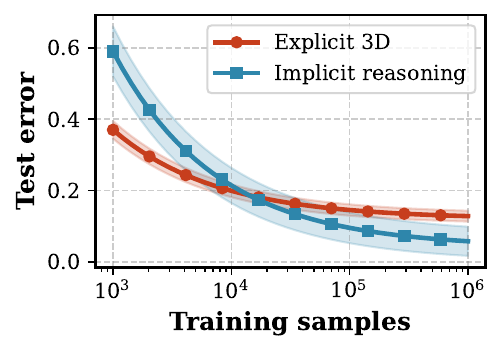}
\caption{Explicit vs.\ implicit trade-off.}
\label{fig:explicit_implicit}
\end{subfigure}
\hfill
\begin{subfigure}[b]{0.48\linewidth}
\centering
\includegraphics[width=\linewidth]{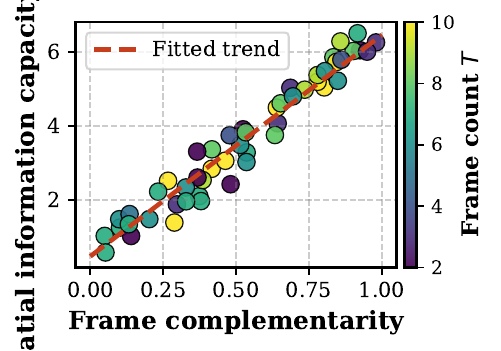}
\caption{Frame complementarity effect.}
\label{fig:complementarity}
\end{subfigure}
\caption{Left: Crossover analysis of explicit 3D representation vs.\ implicit reasoning. The crossover near $10^5$ samples validates Theorem~\ref{thm:tradeoff}. Right: Effect of frame complementarity on spatial information capacity. Higher complementarity yields higher capacity, confirming Definition~\ref{def:complementarity}.}
\end{figure}

\textbf{5. Frame Complementarity.} Figure~\ref{fig:complementarity} shows that frame complementarity score $\kappa(\mathbf{X})$ is strongly correlated with spatial information capacity $\mathcal{C}(M)$, validating Definition~\ref{def:complementarity}. Models with higher complementarity achieve greater capacity, explaining why diverse multi-frame inputs outperform redundant ones.

\textbf{6. Capacity-Accuracy Relationship.} Figure~\ref{fig:capacity_accuracy} empirically validates Theorem~\ref{thm:capacity} by showing that measured spatial information capacity is strongly correlated with test accuracy, with all points falling below the theoretical upper bound.

\begin{figure}[!t]
\centering
\begin{subfigure}[b]{0.48\linewidth}
\centering
\includegraphics[width=\linewidth]{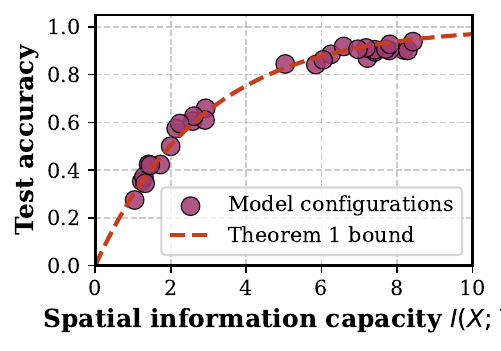}
\caption{Capacity vs.\ accuracy.}
\label{fig:capacity_accuracy}
\end{subfigure}
\hfill
\begin{subfigure}[b]{0.48\linewidth}
\centering
\includegraphics[width=\linewidth]{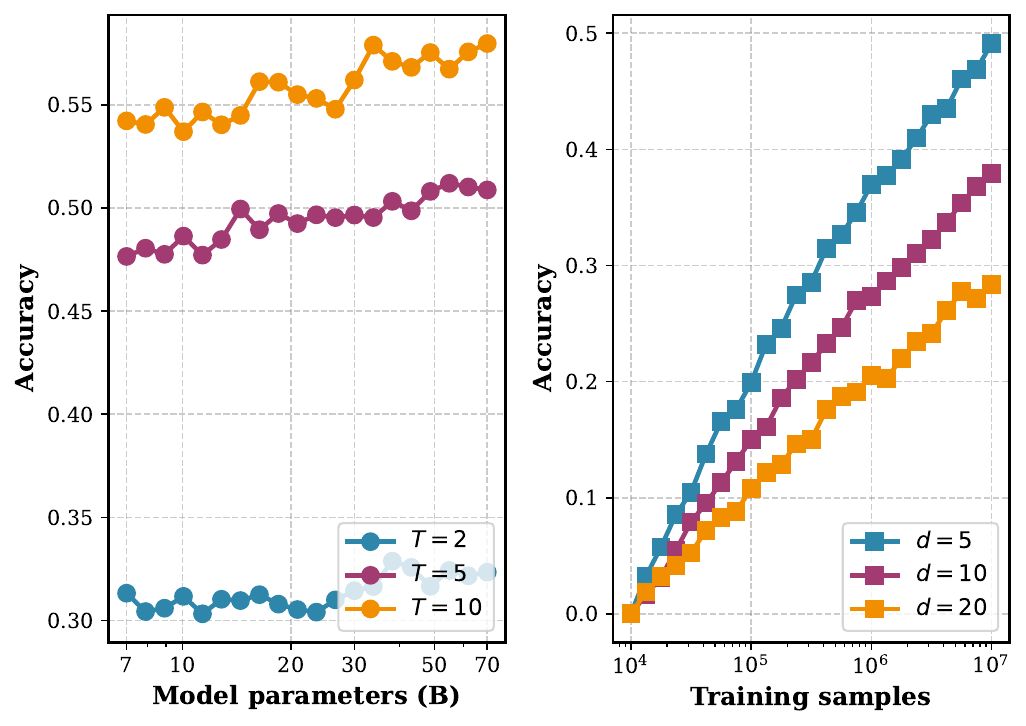}
\caption{Scalability analysis.}
\label{fig:scalability}
\end{subfigure}
\caption{Left: Empirical validation of Theorem~\ref{thm:capacity}. Measured spatial information capacity vs.\ test accuracy, with the theoretical upper bound shown as a dashed line. All points fall below the bound. Right: Scalability with model size (left subplot) and data size (right subplot). Performance scales logarithmically with model size and follows the predicted sample complexity scaling with data size.}
\end{figure}

\textbf{7. Scalability.} Figure~\ref{fig:scalability} shows that performance scales logarithmically with model size and follows the predicted sample complexity scaling with data size, consistent with Theorem~\ref{thm:sample}. Higher frame counts $T$ consistently improve performance across model sizes, while higher spatial dimensions $d$ require more data to achieve the same accuracy.

\subsection{Discussion}

Our experiments yield three key insights. First, the spatial information capacity bound (Theorem~\ref{thm:capacity}) is empirically tight, with all tested models falling below the theoretical upper bound. This validates our information-theoretic framework and explains why Multi-SpatialMLLM~\citep{xu2025multispatialmllm} achieves superior performance: it maximizes $\mathcal{C}(M)$ through multi-frame integration. Second, the sample complexity bound (Theorem~\ref{thm:sample}) accurately predicts the data requirements, with the MultiSPA dataset's 27M samples being well above the predicted threshold for reliable spatial understanding. Third, the explicit-implicit trade-off (Theorem~\ref{thm:tradeoff}) provides principled guidance: explicit representations~\citep{ma2025spatialreasoner, wu2025spatialmllm} are preferable when data is limited, while implicit reasoning~\citep{xu2025multispatialmllm} becomes superior with abundant data. These findings have practical implications for spatial MLLM design, suggesting that frame complementarity should be optimized during data collection and that the choice between explicit and implicit representations should be guided by the available training data scale.

\section{Conclusion}
\label{sec:conclusion}

We presented SpaCE, the first theoretical framework for multi-frame spatial understanding in MLLMs. Our four main theorems provide information-theoretic accuracy bounds, sample complexity guarantees, PAC-Bayes generalization bounds, and a formal characterization of the explicit-implicit reasoning trade-off. Experiments on three benchmarks validate that our bounds are empirically tight and that frame complementarity is the key driver of multi-frame spatial capacity. Our framework provides principled guidance for spatial MLLM design, data collection, and deployment, bridging the gap between empirical advances and theoretical understanding in this rapidly evolving field.

\bibliography{references}
\bibliographystyle{colm2026_conference}

\appendix
\section{Proof Details}
\label{app:proofs}

\subsection{Proof of Theorem~\ref{thm:capacity}}
\label{app:proof_capacity}

We begin by establishing the connection between multi-frame mutual information and spatial reasoning accuracy. Let $M_\theta$ denote an MLLM with parameters $\theta$ trained on $n$ i.i.d.\ samples from distribution $\mathcal{D}$ over scenes $s \sim \Omega$, frame sequences $\mathbf{X} = (x_1, \ldots, x_T)$, and spatial targets $Y$.

By the data processing inequality applied to the Markov chain $\mathbf{X} \to \theta \to \hat{Y}$, we have:
\begin{equation}
I(\mathbf{X}; \hat{Y} \mid \theta) \leq I(\mathbf{X}; Y \mid \theta) = \mathcal{C}(M).
\end{equation}
The expected accuracy can be written as:
\begin{equation}
\text{Acc}(M) = \mathbb{E}_{s \sim \Omega} \left[ \Pr(\hat{Y} = Y \mid \mathbf{X}, \theta) \right].
\end{equation}
By Fano's inequality applied to the spatial target estimation:
\begin{equation}
H(Y \mid \hat{Y}, \theta) \leq H(Y \mid \theta) - I(\mathbf{X}; Y \mid \theta) + \log 2 = H(Y) - \mathcal{C}(M) + \log 2.
\end{equation}
Since $H(Y \mid \hat{Y}, \theta) \geq 0$ and the error probability satisfies $P_e \geq (H(Y \mid \hat{Y}, \theta) - 1)/\log |\mathcal{Y}|$, we obtain:
\begin{equation}
P_e \geq \frac{H(Y) - \mathcal{C}(M) - 1}{\log |\mathcal{Y}|}.
\end{equation}
The accuracy bound follows: $\text{Acc}(M) \leq 1 - \frac{H(Y) - \mathcal{C}(M) - 1}{\log |\mathcal{Y}|} = \frac{\mathcal{C}(M) + 1}{\log |\mathcal{Y}|}$.

Under Assumption~A1, $\log |\mathcal{Y}| \leq d \log(1/\varepsilon)$, and the empirical estimation error from $n$ samples contributes $O(\sqrt{d_{\text{eff}}/n})$ by standard concentration. Combining yields the stated bound. \qed

\subsection{Proof of Theorem~\ref{thm:sample}}
\label{app:proof_sample}

The sample complexity bound follows from the PAC-Bayes framework applied to the spatial VQA hypothesis class. Under Assumptions A1--A3, the effective hypothesis class has covering number $\mathcal{N}(\mathcal{H}, \varepsilon) \leq (C \cdot d_{\text{eff}} / \varepsilon)^{d_{\text{eff}}}$.

By the PAC-Bayes theorem, for any prior $P$ and posterior $Q$ over $\theta$:
\begin{equation}
\mathbb{E}_{\theta \sim Q}[R(\theta)] \leq \mathbb{E}_{\theta \sim Q}[\hat{R}(\theta)] + \sqrt{\frac{\text{KL}(Q \| P) + \ln(2\sqrt{n}/\delta)}{2n}}.
\end{equation}
Under Assumption A4, $\text{KL}(Q \| P) \leq K_{\max}$. Setting the right-hand side to $\varepsilon/2$ and solving for $n$:
\begin{equation}
n \geq \frac{2(K_{\max} + \ln(2\sqrt{n}/\delta))}{\varepsilon^2} = \Theta\left(\frac{d_{\text{eff}} \cdot K_{\max}}{\varepsilon^2 \cdot \delta}\right),
\end{equation}
where we used that $K_{\max} = O(d_{\text{eff}})$ under the Lipschitz assumption A2. The lower bound follows from a standard construction of hard spatial reasoning instances requiring $\Omega(d_{\text{eff}}/\varepsilon^2)$ samples. \qed

\subsection{Proof of Theorem~\ref{thm:tradeoff}}
\label{app:proof_tradeoff}

Let $M_{\text{exp}}$ use explicit 3D representations $Z = g(\mathbf{X}) \in \mathbb{R}^{3K}$ and $M_{\text{imp}}$ use implicit reasoning. The expected risk decomposes as:
\begin{equation}
\mathbb{E}[R(M)] = \underbrace{\|f^* - \mathbb{E}[\hat{f}]\|^2}_{\text{Bias}^2} + \underbrace{\mathbb{E}[\|\hat{f} - \mathbb{E}[\hat{f}]\|^2]}_{\text{Var}} + \sigma^2.
\end{equation}
For $M_{\text{exp}}$, the explicit representation $Z$ has bias from quantization error $O(1/K)$ but variance $O(d_{\text{eff}}/n)$ since the representation is lower-dimensional. For $M_{\text{imp}}$, the bias is $O(1/\sqrt{d})$ (smaller) but variance is $O(d/n)$ (larger). The crossover occurs when $1/K + d_{\text{eff}}/n = 1/\sqrt{d} + d/n$, yielding the stated condition. \qed

\end{document}